\newcommand{\be}{\begin{equation}}
\newcommand{\en}{\end{equation}}
\newcommand{\bea}{\begin{eqnarray}}
\newcommand{\ena}{\end{eqnarray}}
\newcommand{\beano}{\begin{eqnarray*}}
\newcommand{\enano}{\end{eqnarray*}}
\newcommand{\bee}{\begin{enumerate}}
\newcommand{\ene}{\end{enumerate}}
\newcommand{\N}{\mathfrak N}
\newcommand{\mc}{\mathcal}
\newcommand{\D}{{\mc D}}
\newcommand{\F}{{\cal F}}
\newcommand{\Lc}{{\cal L}}
\newcommand{\1}{1 \!\! 1}
\newcommand{\Hil}{\mc H}
\newtheorem{thm}{Theorem}
\newenvironment{proof}{\noindent {\bf Proof --}}{\hfill$\square$ \vspace{3mm}\endtrivlist}
\begin{document}

\thispagestyle{empty}

\vspace*{2cm}

\begin{center}
{\Large \bf  (Regular) pseudo-bosons versus bosons}\\[10mm]

{\large F. Bagarello}\\
  Dipartimento di Metodi e Modelli Matematici,
Facolt\`a di Ingegneria,\\ Universit\`a di Palermo, I-90128  Palermo, Italy\\
e-mail: bagarell@unipa.it\\ Home page:
www.unipa.it/$^\sim$bagarell\\

\vspace{3mm}

\end{center}

\vspace*{2cm}

\begin{abstract}
\noindent We discuss in which sense the so-called {\em regular
pseudo-bosons}, recently introduced by Trifonov and analyzed in some details by the author, are related to ordinary bosons. We repeat the same analysis also for {\em pseudo-bosons}, and we analyze the role played by certain intertwining operators, which may be bounded or not.

\end{abstract}

\vspace{2cm}

%{\bf PACS Numbers}:  .......

\vfill

%\pagenumbering{roman}

\newpage

\section{Introduction}

In a series of recent papers \cite{bagpb1,bagpb2,bagpb3,bagcal}, we have investigated some mathematical aspects of the
so-called {\em pseudo-bosons} (PB),  originally introduced by Trifonov
in \cite{tri}. They arise from the canonical commutation relation
$[a,a^\dagger]=\1$ upon replacing $a^\dagger$ by another (unbounded)
operator $b$ not (in general) related to $a$: $[a,b]=\1$. We have
shown that, under suitable assumptions, $N=ba$ and $\N=N^\dagger=a^\dagger b^\dagger$ can be both
diagonalized, and that their spectra coincide with the set of
natural numbers (including 0), ${\Bbb N}_0$. However the sets of
related eigenvectors are not orthonormal (o.n.) bases but,
nevertheless, they are automatically {\em biorthogonal}. In most of the
examples considered so far, they are bases of the Hilbert space of the system,
$\Hil$, and, in some cases, they turn out to be {\em Riesz bases}.

In \cite{bagpb4} and \cite{abg} some physical examples arising from concrete models in quantum mechanics have been discussed. These examples suggested to introduce the difference between {\em regular pseudo-bosons} (RPB) and PB: the RPB, see Section II, arise when the two sets of eigenvectors of $N$ and $\N$ are mapped one into the other by a bounded operator with bounded inverse. If this operator is unbounded, then we have to do with PB. PB have also been considered by other authors recently, see \cite{jan} for instance, without calling them in this way.  These PB have been shown to have to do with the so-called pseudo-hermitian quantum mechanics, which in recent years have became more and more appealing since it considers the possibility of having non self-adjoint hamiltonians with real spectra, showing that this possibility is related to some commutativity conditions between the hamiltonian itself and the parity and the time reversal operators, \cite{ben}. The same feature, more from a mathematical side, has been analyzed for instance in \cite{mosta,mosta2}. Of course, these references should be considered just as a starting point for a deeper analysis.

In this paper we consider the relation between PB, RPB, and ordinary bosons, proving two similar theorems, one for  PB and the other for RPB. More in details:  in the next section we
introduce and discuss some features of $d$-dimensional PB.  In Sections III we prove our main theorem for RPB, while Section IV contains an analogous result for PB, together with some physical examples; we will see that techniques of unbounded operators are the natural tools in that case. We give our conclusions in Section V.

\section{$d$-dimensional PB and RPB}

In this section we will construct a $d$-dimensional ($d$-D) version  of
what originally proposed in \cite{bagpb1}, to which we refer for
further comments on the 1-D situation.

Let $\Hil$ be a given Hilbert space with scalar product
$\left<.,.\right>$ and related norm $\|.\|$. We introduce $d$
pairs of operators, $a_j$ and $b_j$, $j=1,2,\ldots,d$, acting on $\Hil$ and
satisfying the following commutation rules \be [a_j,b_j]=\1,
\label{21} \en where $j=1,2,\ldots,d$,  all the other commutators being trivial. Of course, they collapse to the CCR's for $d$
independent modes if $b_j=a^\dagger_j$, $j=1,2,\ldots,d$. It is well known
that $a_j$ and $b_j$ are unbounded operators, so they cannot be
defined on all of $\Hil$. Following \cite{bagpb1}, and writing
$D^\infty(X):=\cap_{p\geq0}D(X^p)$ (the common  domain of all the powers of the
operator $X$), we consider the
following:

\vspace{2mm}

{\bf Assumption 1.--} there exists a non-zero
$\varphi_{\bf 0}\in\Hil$ such that $a_j\varphi_{\bf 0}=0$, $j=1,2,\ldots,d$,
and $\varphi_{\bf 0}\in D^\infty(b_1)\cap D^\infty(b_2)\cap\cdots\cap D^\infty(b_d)$.

{\bf Assumption 2.--} there exists a non-zero $\Psi_{\bf 0}\in\Hil$
such that $b_j^\dagger\Psi_{\bf 0}=0$, $j=1,2,\ldots,d$, and $\Psi_{\bf 0}\in
D^\infty(a_1^\dagger)\cap D^\infty(a_2^\dagger)\cap\cdots\cap D^\infty(a_d^\dagger)$.

\vspace{2mm}

Under these assumptions we can introduce the following vectors in
$\Hil$:

\be
\left\{
\begin{array}{ll}
\varphi_{\bf n}:=\varphi_{n_1,n_2,\ldots,n_d}=\frac{1}{\sqrt{n_1!n_2!\cdots n_d!}}\,b_1^{n_1}\,b_2^{n_2}\cdots b_d^{n_d}\,\varphi_{\bf 0}\\
\Psi_{\bf n}:=\Psi_{n_1,n_2,\ldots,n_d}=\frac{1}{\sqrt{n_1!n_2!\cdots n_d!}}\,{a_1^\dagger}^{n_1}\,{a_2^\dagger}^{n_2}\cdots {a_d^\dagger}^{n_d}\,\Psi_{\bf 0},
\end{array}
\right.\label{22}\en $n_j=0, 1, 2,\ldots$, for all $j=1,2,\ldots,d$. Let us now define the unbounded
operators $N_j:=b_ja_j$ and $\N_j:=N_j^\dagger=a_j^\dagger
b_j^\dagger$, $j=1,2,\ldots,d$.  It is possible to check that
$\varphi_{\bf n}$ belongs to the domain of $N_j$, $D(N_j)$, and that
$\Psi_{\bf n}\in D(\N_j)$, for all possible $\bf n$. Moreover,
\be N_j\varphi_{\bf n}=n_j\varphi_{\bf n},  \quad \N_j\Psi_{\bf n}=n_j\Psi_{\bf n}.
\label{23}\en

Under the above assumptions, and if we chose the normalization of
$\Psi_{\bf 0}$ and $\varphi_{\bf 0}$ in such a way that
$\left<\Psi_{\bf 0},\varphi_{\bf 0}\right>=1$, we find that \be
\left<\Psi_{\bf n},\varphi_{\bf m}\right>=\delta_{\bf n,m}=\prod_{j=1}^d \delta_{n_j,m_j}. \label{27}\en This means that the sets
$\F_\Psi=\{\Psi_{\bf n}\}$ and
$\F_\varphi=\{\varphi_{\bf n}\}$ are {\em biorthogonal} and,
because of this, the vectors of each set are linearly independent.
If we now call $\D_\varphi$ and $\D_\Psi$ respectively the linear
span of  $\F_\varphi$ and $\F_\Psi$, and $\Hil_\varphi$ and
$\Hil_\Psi$ their closures, then \be f=\sum_{\bf n}
\left<\Psi_{\bf n},f\right>\,\varphi_{\bf n}, \quad \forall
f\in\Hil_\varphi,\qquad  h=\sum_{\bf n}
\left<\varphi_{\bf n},h\right>\,\Psi_{\bf n}, \quad \forall
h\in\Hil_\Psi. \label{210}\en What is not in general ensured is
that the Hilbert spaces introduced so far all coincide, i.e. that
$\Hil_\varphi=\Hil_\Psi=\Hil$. Indeed, we can only state that
$\Hil_\varphi\subseteq\Hil$ and $\Hil_\Psi\subseteq\Hil$. However,
motivated by the examples discussed in the literature,  we make the

\vspace{2mm}

{\bf Assumption 3.--} The above Hilbert spaces all coincide:
$\Hil_\varphi=\Hil_\Psi=\Hil$,

\vspace{2mm}

which was introduced in \cite{bagpb1}. This means, in particular,
that both $\F_\varphi$ and $\F_\Psi$ are bases of $\Hil$, so that the following resolutions of the identity, written in bra-ket notation, hold:
\be
\sum_{\bf n}|\varphi_{\bf n}\left>\right<\Psi_{\bf n}|=\sum_{\bf n}|\Psi_{\bf n}\left>\right<\varphi_{\bf n}|=\1.
\label{210b}\en
Let us
now introduce the operators $S_\varphi$ and $S_\Psi$ via their
action respectively on  $\F_\Psi$ and $\F_\varphi$: \be
S_\varphi\Psi_{\bf n}=\varphi_{\bf n},\qquad
S_\Psi\varphi_{\bf n}=\Psi_{\bf n}, \label{213}\en for all $\bf n$, which in particular imply that
$\Psi_{\bf n}=(S_\Psi\,S_\varphi)\Psi_{\bf n}$ and
$\varphi_{\bf n}=(S_\varphi \,S_\Psi)\varphi_{\bf n}$, for all
$\bf n$. Hence \be S_\Psi\,S_\varphi=S_\varphi\,S_\Psi=\1 \quad
\Rightarrow \quad S_\Psi=S_\varphi^{-1}. \label{214}\en In other
words, both $S_\Psi$ and $S_\varphi$ are invertible and one is the
inverse of the other. Furthermore, we can also check that they are
both positive, well defined and symmetric, \cite{bagpb1}. Moreover, at
least formally, it is possible to write these operators  as \be S_\varphi=\sum_{\bf n}\,
|\varphi_{\bf n}><\varphi_{\bf n}|,\qquad S_\Psi=\sum_{\bf n}
\,|\Psi_{\bf n}><\Psi_{\bf n}|. \label{212}\en
 These expressions are
only formal, at this stage, since the series may or may not converge in
the uniform topology and the operators $S_\varphi$ and $S_\Psi$ could be unbounded.
Indeed we know,  \cite{you}, that two biorthogonal bases are related by a bounded operator, with bounded inverse, if and only if they are Riesz bases\footnote{Recall that a set of vectors $\phi_1, \phi_2 , \phi_3 , \; \ldots \; ,$ is a Riesz basis of a Hilbert space $\mathcal H$, if there exists a bounded operator $V$, with bounded inverse, on $\mathcal H$, and an orthonormal basis of $\Hil$,  $\varphi_1, \varphi_2 , \varphi_3 , \; \ldots \; ,$ such that $\phi_j=V\varphi_j$, for all $j=1, 2, 3,\ldots$}. This is why in \cite{bagpb1} we have also considered

\vspace{2mm}

{\bf Assumption 4.--} $\F_\varphi$ and $\F_\Psi$ are Bessel sequences. In other words, there exist two positive constants $A_\varphi,A_\Psi>0$ such that, for all $f\in\Hil$,
\be
\sum_{\bf n}\,|\left<\varphi_{\bf n},f\right>|^2\leq A_\varphi\,\|f\|^2,\qquad \sum_{\bf n}\,|\left<\Psi_{\bf n},f\right>|^2\leq A_\Psi\,\|f\|^2.
\label{215}\en

\vspace{3mm}
\noindent
This assumption is equivalent to require that $\F_\varphi$ and $\F_\Psi$ are both Riesz bases, and implies  that $S_\varphi$ and $S_\Psi$ are bounded operators:
$\|S_\varphi\|\leq A_\varphi$, $\|S_\Psi\|\leq A_\Psi$. Moreover
$
\frac{1}{A_\Psi}\,\1\leq S_\varphi \leq A_\varphi\,\1,$ and  $\frac{1}{A_\varphi}\,\1\leq S_\Psi \leq A_\Psi\,\1.
$
Hence the domains of $S_\varphi$ and $S_\Psi$ can be taken to be all of $\Hil$. While Assumptions 1, 2 and 3 are quite often satisfied, as the examples contained in our previous papers and  in the recent review \cite{bagrev} show,  it is quite difficult to find {\bf physical} examples satisfying also Assumption 4. On the other hand, it is rather easy to find {\bf mathematical} examples satisfying all the assumptions, see Section II.1 below. Hence, as announced, we introduce the following difference: we call {\em pseudo-bosons} (PB) those {\em excitations} satisfying  the first three assumptions, while, if  Assumption 4 is also satisfied, these will be called {\em regular pseudo-bosons} (RPB). Clearly, RPB are PB, but the converse is false, in general.

Generalizing what already discussed in \cite{bagpb1,abg},  these
$d$-dimensional pseudo-bosons give rise to interesting
intertwining relations among non self-adjoint operators, see also
\cite{bagpb3} and references therein. In particular it is easy to
check that \be S_\Psi\,N_j=\N_jS_\Psi \quad \mbox{ and }\quad
N_j\,S_\varphi=S_\varphi\,\N_j, \label{219}\en $j=1,2,\ldots,d$. This is
related to the fact that the spectra of, say, $N_1$ and $\N_1$
coincide and that their eigenvectors are related by the operators
$S_\varphi$ and $S_\Psi$, see equations (\ref{23}) and (\ref{213}), in agreement with the literature on
intertwining operators, \cite{intop,bag1}, and on pseudo-Hermitian
quantum mechanics, see \cite{ben,mosta,mosta2} and references therein.

\subsection{Construction of RPB}

We will show here that each Riesz basis produces some RPB. Let $\F_\varphi:=\{\varphi_{\bf n}\}$ be a Riesz basis of $\Hil$ with bounds $A$ and $B$, $0<A\leq B<\infty$. The associated frame operator $S:=\sum_{\bf n}\,|\varphi_{\bf n}><\varphi_{\bf n}|$ is bounded, positive and admits a bounded inverse. Also, the set $\F_{\hat\varphi}:=\{\hat\varphi_{\bf n}:=S^{-1/2}\varphi_{\bf n}\}$ is an o.n. basis of $\Hil$. Therefore we can define $d$ lowering operators $a_{j,\hat\varphi}$ on  $\F_{\hat\varphi}$ as $a_{j,\hat\varphi}\hat\varphi_{\bf n}=\sqrt{n_j}\,\hat\varphi_{\bf n_{j-}}$, and their adjoints, $a_{j,\hat\varphi}^\dagger$, as $a_{j,\hat\varphi}^\dagger\hat\varphi_{\bf n}=\sqrt{n_j+1}\,\hat\varphi_{\bf n_{j+}}$. Here ${\bf n}_{j-}=(n_1,\ldots,n_j-1,\ldots,n_d)$ and ${\bf n}_{j+}=(n_1,\ldots,n_j+1,\ldots,n_d)$. Hence $[a_{j,\hat\varphi},a_{k,\hat\varphi}^\dagger]=\delta_{j,k}\,\1$. If we now define $a_j:=S^{1/2}\,a_{j,\hat\varphi}\,S^{-1/2}$, this acts on the Riesz basis $\F_\varphi$ as a lowering operator. However, since $\F_\varphi$ is not an o.n. basis in general, $a_j^\dagger$ is not a raising operator,  so that $[a_j,a_k^\dagger]\neq\delta_{j,k}\,\1$. However, if we now define the operator $b_j:=S^{1/2}\,a_{j,\hat\varphi}^\dagger\,S^{-1/2}$, it is clear that in general  $b_j\neq a_j^\dagger$, and $b_j$ acts on $\varphi_{\bf n}$ as a raising operator: $b_j\,\varphi_{\bf n}=\sqrt{n_j+1}\,\varphi_{{\bf n}_{j+}}$, for all $\bf n$. Then we have $[a_j,b_k]=\delta_{j,k}\,\1$. So we have constructed two sets of operators satisfying (\ref{21}) and which are not related by a simple conjugation. This is not the end of the story. Indeed:
\begin{enumerate}
\item Assumption 1 is verified since $\varphi_{\bf 0}$ is annihilated by $a_j$ and belongs to the domain of all the powers of $b_j$.
\item As for Assumption 2, it is enough to define $\Psi_{\bf 0}=S^{-1}\,\varphi_{\bf 0}$. With this definition $b_j^\dagger\,\Psi_{\bf 0}=0$ and $\Psi_{\bf 0}$ belongs to the domain of all the powers of $a_j^\dagger$.
\item Since $\F_\varphi$ is a Riesz basis of $\Hil$ by assumption, then $\Hil_\varphi=\Hil$. Notice now that the vectors $\Psi_{\bf n}$  can be written as $\Psi_{\bf n}=S^{-1}\,\varphi_{\bf n}$, for all $\bf n$. Hence $\F_\Psi$ is in duality with $\F_\varphi$ and therefore is a Riesz basis of $\Hil$ as well. Hence $\Hil_\Psi=\Hil$. This proves Assumption 3.
\item As for Assumption 4, this is equivalent to the hypothesis originally assumed here, i.e. that $\F_\varphi$ is a Riesz basis.
\end{enumerate}

Explicit examples arising from this general construction can be found in \cite{bagcal}.

\subsection{Coherent states}

As it is well known there exist several different, and not always
equivalent, ways to define {\em
coherent states}, \cite{book1,book2}. In this paper, following \cite{bagpb1}, we will
adopt the following definition: let $z_j$, $j=1,2,\ldots,d$ be $d$ complex variables, $z_j\in \D$ (some domain in $\Bbb{C}$), and let us introduce the following
operators: \be \left\{
\begin{array}{ll}
U_j(z_j)=e^{z_j\,b_j-\overline{z}_j\,a_j}=e^{-|z_j|^2/2}\,e^{z_j\,b_j}\,e^{-\overline{z}_j\,a_j},
\\
V_j(z_j)=e^{z_j\,a_j^\dagger-\overline{z}_j\,b_j^\dagger}=e^{-|z_j|^2/2}\,e^{z_j\,a_j^\dagger}
\,e^{-\overline{z}_j\,b_j^\dagger},
\end{array}
\right.\label{31}\en $j=1,2,\ldots,d$,  \be\left\{
\begin{array}{ll}
U(z_1,z_2,\ldots,z_d):=U_1(z_1)\,U_2(z_2)\,\cdots\,U_d(z_d),\\
V(z_1,z_2,\ldots,z_d):=V_1(z_1)\,V_2(z_2)\,\cdots\,V_d(z_d),\end{array}
\right. \label{31b}\en and the following
vectors: \be \varphi(z_1,z_2,\ldots,z_d)=U(z_1,z_2,\ldots,z_d)\varphi_{\bf 0},\qquad
\Psi(z_1,z_2,\ldots,z_d)=V(z_1,z_2,\ldots,z_d)\,\Psi_{\bf 0}. \label{32}\en \vspace{2mm}

{\bf Remarks:--} (1) Due to the commutation rules for the
operators $b_j$ and $a_j$, we  clearly have
$[U_j(z_j),U_k(z_k)]=[V_j(z_j),V_k(z_k)]=0$, for $j\neq k$.

(2) Since the operators $U$ and $V$ are, for generic $z_j$, unbounded, definition (\ref{32}) makes sense only if
$\varphi_{\bf 0}\in D(U)$ and $\Psi_{\bf 0}\in D(V)$, a condition which
will be assumed here. In \cite{bagpb1} it was proven that, for
instance, this is so when $\F_\varphi$ and $\F_\Psi$ are Riesz
bases.

(3) The set $\D$ could be,  in
principle, a proper subset of $\Bbb{C}$.

\vspace{2mm}

It is possible to write the vectors $\varphi(z_1,z_2,\ldots,z_d)$ and
$\Psi(z_1,z_2,\ldots,z_d)$ in terms of the vectors of $\F_\Psi$ and
$\F_\varphi$ as \be
\left\{
\begin{array}{ll}
\varphi(z_1,z_2,\ldots,z_d)=e^{-(|z_1|^2+|z_2|^2+\ldots+|z_d|^2)/2}\,\sum_{\bf n}\,\frac{z_1^{n_1}\,z_2^{n_2}\cdots z_d^{n_d}}{\sqrt{n_1!\,n_2!\ldots n_d!}}\,\varphi_{\bf n},\\
\\
\Psi(z_1,z_2,\ldots,z_d)=e^{-(|z_1|^2+|z_2|^2+\ldots+|z_d|^2)/2}\,\sum_{\bf n}\,\frac{z_1^{n_1}\,z_2^{n_2}\cdots z_d^{n_d}}{\sqrt{n_1!\,n_2!\ldots n_d!}}\,\Psi_{\bf n}.\end{array}
\right.
\label{33}\en

These vectors are called {\em coherent} since they are eigenstates
of the lowering operators. Indeed we can check that \be
a_j\varphi(z_1,z_2,\ldots,z_d)=z_j\varphi(z_1,z_2,\ldots,z_d), \qquad
b_j^\dagger\Psi(z_1,z_2,\ldots,z_d)=z_j\Psi(z_1,z_2,\ldots,z_d), \label{34}\en for
$j=1,2,\ldots,d$ and $z_j\in\D$. It is also a standard exercise, putting
$z_j=r_j\,e^{i\theta_j}$, to check that the following operator
equalities hold: \be
\left\{
\begin{array}{ll}
\frac{1}{\pi^d}\int_{\Bbb{C}}\,dz_1\int_{\Bbb{C}}\,dz_2\,\ldots\int_{\Bbb{C}}\,dz_d\,
|\varphi(z_1,z_2,\ldots,z_d)><\varphi(z_1,z_2,\ldots,z_d)|=S_\varphi, \\
\frac{1}{\pi^d}\int_{\Bbb{C}}\,dz_1\int_{\Bbb{C}}\,dz_2\,\ldots\int_{\Bbb{C}}\,dz_d\,
|\Psi(z_1,z_2,\ldots,z_d)><\Psi(z_1,z_2,\ldots,z_d)|=S_\Psi,\end{array}
\right. \label{35a}\en as well as
\be \frac{1}{\pi^d}\int_{\Bbb{C}}\,dz_1\int_{\Bbb{C}}\,dz_2\,\ldots\int_{\Bbb{C}}\,dz_d\,
|\varphi(z_1,z_2,\ldots,z_d)><\Psi(z_1,z_2,\ldots,z_d)|=\sum_{\bf n}|\varphi_{\bf n}\left>\right<\Psi_{\bf n}|=\1, \label{36a}\en which are
written in convenient bra-ket notation. It should be said that
these equalities are, most of the times, only formal results.
Indeed, extending an analogous result given in \cite{abg} for $d=2$, we can prove the following

\vspace{2mm}

\begin{thm}\label{thm1} Let $a_j$, $b_j$, $\F_\varphi$, $\F_\Psi$, $\varphi(z_1,z_2,\ldots,z_d)$ and $\Psi(z_1,z_2,\ldots,z_d)$ be as above. Let us assume that
(1) $\F_\varphi$, $\F_\Psi$ are Riesz bases; (2) $\F_\varphi$,
$\F_\Psi$ are biorthogonal. Then (\ref{36a}) holds true.

\end{thm}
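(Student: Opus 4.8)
The plan is to establish (\ref{36a}) in the weak (sesquilinear-form) sense: for arbitrary $f,g\in\Hil$ I will show that
\[
\frac{1}{\pi^d}\int_{\Bbb{C}}dz_1\int_{\Bbb{C}}dz_2\cdots\int_{\Bbb{C}}dz_d\;\left<f,\varphi(z_1,\ldots,z_d)\right>\left<\Psi(z_1,\ldots,z_d),g\right>=\left<f,g\right>,
\]
which, since $\F_\varphi$ and $\F_\Psi$ are biorthogonal bases of $\Hil$ (by hypotheses (1)--(2)), so that the resolution of the identity (\ref{210b}) holds, is exactly the assertion that the weak integral in (\ref{36a}) equals $\1$. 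First I would substitute the expansions (\ref{33}), which under hypothesis (1) are legitimate in $\Hil$, to get
\[
\left<f,\varphi(z)\right>\left<\Psi(z),g\right>=e^{-(|z_1|^2+\cdots+|z_d|^2)}\sum_{\bf n}\sum_{\bf m}\frac{z_1^{n_1}\cdots z_d^{n_d}\,\overline{z_1^{m_1}\cdots z_d^{m_d}}}{\sqrt{{\bf n}!\,{\bf m}!}}\,\left<f,\varphi_{\bf n}\right>\left<\Psi_{\bf m},g\right>,
\]
with $z=(z_1,\ldots,z_d)$ and ${\bf n}!:=n_1!\cdots n_d!$. Both series converge absolutely, and uniformly for $z$ in any compact set: by Cauchy--Schwarz together with the Bessel bound (\ref{215}), and using $\sum_{\bf n}\frac{|z_1|^{2n_1}\cdots|z_d|^{2n_d}}{{\bf n}!}=e^{|z_1|^2+\cdots+|z_d|^2}$, one has $\sum_{\bf n}\frac{|z_1|^{n_1}\cdots|z_d|^{n_d}}{\sqrt{{\bf n}!}}\,|\left<f,\varphi_{\bf n}\right>|\leq e^{(|z_1|^2+\cdots+|z_d|^2)/2}\sqrt{A_\varphi}\,\|f\|$, and symmetrically for $\F_\Psi$ with $A_\Psi$.

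Next I would pass to polar coordinates $z_j=r_je^{i\theta_j}$ and integrate in two stages. For fixed radii $r_1,\ldots,r_d$ the integrand is continuous in $(\theta_1,\ldots,\theta_d)$ on the compact torus $[0,2\pi]^d$, so the uniform majorant above legitimizes term-by-term angular integration; since $\int_0^{2\pi}e^{i(n_j-m_j)\theta_j}\,d\theta_j=2\pi\,\delta_{n_j,m_j}$, the double sum collapses onto its diagonal and one is left with
\[
\frac{1}{(2\pi)^d}\int_{[0,2\pi]^d}\left<f,\varphi(z)\right>\left<\Psi(z),g\right>\,d\theta_1\cdots d\theta_d=e^{-(r_1^2+\cdots+r_d^2)}\sum_{\bf n}\frac{r_1^{2n_1}\cdots r_d^{2n_d}}{{\bf n}!}\,\left<f,\varphi_{\bf n}\right>\left<\Psi_{\bf n},g\right>.
\]
The remaining radial integral I would then handle by Tonelli's theorem: the interchange of the sum with the integral is justified by the finite majorant $\sum_{\bf n}|\left<f,\varphi_{\bf n}\right>|\,|\left<\Psi_{\bf n},g\right>|\leq\sqrt{A_\varphi A_\Psi}\,\|f\|\,\|g\|$, again a consequence of the Bessel (hence Riesz) property, and $\int_0^\infty e^{-r_j^2}r_j^{2n_j}r_j\,dr_j=\tfrac12 n_j!$ makes the coefficient multiplying $\left<f,\varphi_{\bf n}\right>\left<\Psi_{\bf n},g\right>$ equal to $1$. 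Hence the full integral equals $\sum_{\bf n}\left<f,\varphi_{\bf n}\right>\left<\Psi_{\bf n},g\right>$, which is $\left<f,g\right>$ by (\ref{210b}), completing the argument.

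I expect the only genuine obstacle to be the two interchanges of summation and integration. The angular one is routine, being over a compact set with a uniform bound. The radial one is where the hypotheses really enter: a naive application of Fubini directly on $\Bbb{C}^d$ to the full double sum would require $\ell^1$-summability of the coefficients $\left<f,\varphi_{\bf n}\right>$ and $\left<\Psi_{\bf m},g\right>$, which is not available in general. It is precisely by performing the angular integration first -- thereby exploiting $\int_0^{2\pi}e^{i(n_j-m_j)\theta_j}\,d\theta_j=0$ for $n_j\neq m_j$ to kill the off-diagonal terms -- that the problem is reduced to an $\ell^2$ estimate, which the Riesz/Bessel assumption (\ref{215}) supplies; this is the crux of the proof and the role played by hypothesis (1).
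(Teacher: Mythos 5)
Your proof is correct. The paper itself offers no proof of Theorem \ref{thm1} --- it merely states that the result extends the $d=2$ case of \cite{abg} --- but your argument (establishing the identity weakly on pairs $f,g$, expanding via (\ref{33}), doing the angular integration first so that $\int_0^{2\pi}e^{i(n_j-m_j)\theta_j}d\theta_j$ kills the off-diagonal terms, and then using the Bessel bounds (\ref{215}) with Cauchy--Schwarz to get the $\ell^1$ control needed for the radial interchange and the Gaussian moments $\int_0^\infty e^{-r^2}r^{2n+1}dr=n!/2$) is precisely the standard computation that reference carries out, so it matches the intended approach; your closing observation that the Riesz hypothesis enters exactly at the interchange of sum and radial integral correctly identifies where the hypotheses are used.
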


Suppose therefore that
the above construction gives coherent states that do not satisfy a
resolution of the identity (see \cite{bagpb2} for an example). Then, since $\F_\varphi$ and
$\F_\Psi$ are automatically biorthogonal, they cannot be Riesz
bases (neither one of them)!

\section{RPB versus bosons}

In this section we will prove the following theorem, given in $d=1$ for simplicity, establishing a sort of equivalence between RPB and ordinary bosons. This equivalence is related to the existence of a bounded operator $T$ with bounded inverse and of a pair of conjugate operators $c$ and $c^\dagger$ satisfying the canonical commutation rule $[c,c^\dagger]=\1$, which are related with the original pair of operators $a$ and $b$. More in details we have:

\begin{thm}
Let $a$ and $b$ be two operators on $\Hil$ satisfying  $[a,b]=\1$, and for which Assumptions 1, 2, 3 and 4 of Section II are satisfied. Then an unbounded, densely defined, operator $c$ on $\Hil$ exists, together with a positive bounded operator $T$ with bounded inverse $T^{-1}$, such that $[c,c^\dagger]=\1$. Moreover
\be
a=T\,c\,T^{-1},\qquad b=T\,c^\dagger T^{-1}.
\label{41}\en
Viceversa, given an unbounded, densely defined, operator $c$ on $\Hil$ satisfying $[c,c^\dagger]=\1$ and a positive bounded operator $T$ with bounded inverse $T^{-1}$, two operators $a$ and $b$ can be introduced for which $[a,b]=\1$, and for which equations (\ref{41}) and Assumptions 1, 2, 3 and 4 of Section II are satisfied.

\end{thm}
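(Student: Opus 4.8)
The plan is to route both implications through the orthonormal basis that is hidden inside a Riesz basis. For the direct part I would set $T:=S_\varphi^{1/2}$: by Assumption 4 the operator $S_\varphi$ is bounded, positive and bounded below (its spectral bounds are recorded right after (\ref{215})), so $T$ is a positive bounded operator with bounded inverse $T^{-1}=S_\Psi^{1/2}$. Putting $\hat\varphi_n:=T^{-1}\varphi_n$, I would use $T^{-2}=S_\varphi^{-1}=S_\Psi$ (from (\ref{214})) together with $S_\Psi\varphi_m=\Psi_m$ (from (\ref{213})) to get
\[
\langle\hat\varphi_n,\hat\varphi_m\rangle=\langle\varphi_n,S_\Psi\varphi_m\rangle=\langle\varphi_n,\Psi_m\rangle=\delta_{n,m};
\]
since $T^{-1}$ is a bounded bijection and $\F_\varphi$ is a basis of $\Hil$, the set $\{\hat\varphi_n\}$ is then an orthonormal basis. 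I would take $c$ to be the standard bosonic annihilation operator attached to this basis, $c\,\hat\varphi_n=\sqrt{n}\,\hat\varphi_{n-1}$ on the dense span of the $\hat\varphi_n$ (or its closure); it is densely defined and necessarily unbounded, its adjoint acts as $c^\dagger\hat\varphi_n=\sqrt{n+1}\,\hat\varphi_{n+1}$, and $[c,c^\dagger]=\1$ on that span.

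The next step is to verify (\ref{41}). From (\ref{22}) one reads off $b\,\varphi_n=\sqrt{n+1}\,\varphi_{n+1}$, and from $[a,b]=\1$, $a\varphi_0=0$ and the elementary identity $a\,b^n=b^n a+n\,b^{n-1}$ one gets $a\,\varphi_n=\sqrt{n}\,\varphi_{n-1}$. Hence for each $n$,
\[
T\,c\,T^{-1}\varphi_n=\sqrt{n}\,T\hat\varphi_{n-1}=\sqrt{n}\,\varphi_{n-1}=a\,\varphi_n,\qquad T\,c^\dagger T^{-1}\varphi_n=\sqrt{n+1}\,\varphi_{n+1}=b\,\varphi_n,
\]
so (\ref{41}) holds on the dense, $a$- and $b$-invariant domain $\D_\varphi$; conjugating $[c,c^\dagger]=\1$ by $T$ re-derives $[a,b]=\1$, consistently.

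For the converse I would take $c$ to be an ordinary bosonic annihilation operator, i.e.\ one carrying a cyclic vacuum $e_0$, so that $e_n:=\frac{1}{\sqrt{n!}}(c^\dagger)^n e_0$ is an orthonormal basis, $c\,e_n=\sqrt{n}\,e_{n-1}$ and $[c,c^\dagger]=\1$. Given a positive bounded $T$ with bounded inverse, set $a:=T\,c\,T^{-1}$ and $b:=T\,c^\dagger\,T^{-1}$; then (\ref{41}) holds by construction and $[a,b]=T[c,c^\dagger]T^{-1}=\1$. With $\varphi_n:=T e_n$, $\Psi_n:=T^{-1}e_n$, and using $a^\dagger=T^{-1}c^\dagger T$, $b^\dagger=T^{-1}c\,T$, I would check in turn: $a\varphi_0=0$, $b^\dagger\Psi_0=0$, $\varphi_n=\frac{1}{\sqrt{n!}}b^n\varphi_0$, $\Psi_n=\frac{1}{\sqrt{n!}}(a^\dagger)^n\Psi_0$, and $\varphi_0\in D^\infty(b)$, $\Psi_0\in D^\infty(a^\dagger)$ (because $(c^\dagger)^k e_0=\sqrt{k!}\,e_k$ is defined for every $k$), which is Assumptions 1 and 2; $\F_\varphi,\F_\Psi$ are Riesz bases, being images of an orthonormal basis under bounded bijections, so $\Hil_\varphi=\Hil_\Psi=\Hil$ (Assumption 3) and (\ref{215}) holds with bounds governed by $\|T\|^2$ and $\|T^{-1}\|^2$ (Assumption 4); finally $\langle\Psi_n,\varphi_m\rangle=\langle e_n,e_m\rangle=\delta_{n,m}$ with $\langle\Psi_0,\varphi_0\rangle=1$, the required biorthogonality and normalization.

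The main obstacle is the familiar one with pseudo-bosons: the relations $a=TcT^{-1}$, $b=Tc^\dagger T^{-1}$, $[a,b]=\1$ and $[c,c^\dagger]=\1$ are algebraic identities that only literally hold on suitable dense domains, so to read (\ref{41}) as an equality of closed operators one must know that $\D_\varphi$ is a common core for $a$ and $b$, with the span of the $\hat\varphi_n$ playing the same role for $c$ and $c^\dagger$. Relatedly, in the converse the hypothesis on $c$ has to be understood as including the existence of a cyclic vacuum — equivalently, that $c^\dagger c$ is self-adjoint with simple spectrum $\mathbb{N}_0$ — since $[c,c^\dagger]=\1$ by itself does not force the Fock representation; once this is granted, everything above reduces to routine bookkeeping of bounded conjugations.
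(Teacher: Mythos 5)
Your proof is correct and follows essentially the same route as the paper: conjugating by $T=S_\varphi^{1/2}$ to pass between the Riesz basis $\F_\varphi$ and the orthonormal basis $\hat\varphi_n=S_\varphi^{-1/2}\varphi_n$ in the direct part, and transporting an orthonormal Fock basis by $T$ and $T^{-1}$ in the converse. Your closing remark that $[c,c^\dagger]=\1$ alone does not guarantee a cyclic vacuum, so the converse tacitly assumes the Fock representation, is a valid observation about a hypothesis the paper also uses implicitly when it writes $\hat\varphi_n=\frac{{c^\dagger}^n}{\sqrt{n!}}\,\hat\varphi_0$.
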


\begin{proof}

To prove the first part of the theorem we first remind that, because of Assumption 4 of Section II, the operators $S_\varphi$ and $S_\Psi$ defined as in (\ref{212}), \be S_\varphi\,f=\sum_{n=0}^\infty\,
\left<\varphi_n,f\right>\varphi_{n},\qquad S_\Psi\,f=\sum_{n}
\,\left<\Psi_{n},f\right>\Psi_{n}, \label{42}\en
$f\in\Hil$, are well defined, bounded and positive (hence, self-adjoint). Also, $S_\varphi=S_\Psi^{-1}$ . These are standard results in the theory of Riesz bases, \cite{you,chri}. In particular, choosing the normalization constants in $\Psi_0$ and $\varphi_0$ in such a way that $\left<\Psi_0,\varphi_0\right>=1$, we know that  $\left<\Psi_n,\varphi_m\right>=\delta_{n,m}$ and, as a consequence,
\be
S_\varphi\,\Psi_m=\varphi_m,\qquad S_\Psi\,\varphi_m=\Psi_m,
\label{43}\en
for all $m\geq0$. Because of the properties of $S_\Psi$ and $S_\varphi$, their square roots surely exist and, for instance, $S_\varphi^{-1/2}=S_\Psi^{1/2}$. Hence we define the vectors $\hat\varphi_n=S_\varphi^{-1/2}\varphi_n$, $n\geq0$, and the related set $\F_{\hat\varphi}=\{\hat\varphi_n,\,n\geq0\}$. It is well known that $\F_{\hat\varphi}$ is an o.n. basis of $\Hil$, and it coincides with the o.n. basis we would construct introducing (apparently) new vectors  $\hat\Psi_n=S_\Psi^{-1/2}\Psi_n$, $n\geq0$, since it can be easily checked that, for all $n$, $\hat\Psi_n=\hat\varphi_n$.

On $\F_{\hat\varphi}$ we can define the ordinary bosonic lowering and raising operators:
\be
\left\{
\begin{array}{ll}
c\,\hat\varphi_n=\sqrt{n }\,\hat\varphi_{n-1},\\
c^\dagger\,\hat\varphi_n=\sqrt{n+1 }\,\hat\varphi_{n+1},\\\end{array}
\right.
\label{44}\en
with the convention that $c\,\hat\varphi_0=0$. Of course $[c,c^\dagger]=\1$. Recall now, \cite{bagpb1}, that  our working hypotheses also imply that $a\,\varphi_n=\sqrt{n }\,\varphi_{n-1}$ and
$b\,\varphi_n=\sqrt{n+1 }\,\varphi_{n+1}$, which can be rewritten as $S_\varphi^{-1/2}a\,S_\varphi^{1/2}\,\hat\varphi_n=\sqrt{n }\,\hat\varphi_{n-1},$ and
$S_\varphi^{-1/2}b\,S_\varphi^{1/2}\,\hat\varphi_n=\sqrt{n+1 }\,\hat\varphi_{n+1}$. Hence $a$, $b$ and $c$ are related as follows:
$$
c=S_\varphi^{-1/2}a\,S_\varphi^{1/2}, \qquad c^\dagger=S_\varphi^{-1/2}b\,S_\varphi^{1/2},
$$
which are exactly  equations (\ref{41}), identifying $T$ with $S_\varphi^{1/2}$.

\vspace{2mm}

Let us now prove the second part of the theorem. First of all, by means of $c$ and $c^\dagger$, we construct the o.n. basis $\F_{\hat\varphi}$ of $\Hil$, $\F_{\hat\varphi}=\left\{\hat\varphi_n=\frac{{c^\dagger}^n}{\sqrt{n! }}\,\hat\varphi_0\right\}$, where $c\,\hat\varphi_0=0$. Then, since both $T$ and $T^{-1}$ are bounded and, therefore, everywhere defined, we can introduce two new families of vectors: $\F_\varphi=\left\{\varphi_n=T\,\hat\varphi_n,\,n\geq0\right\}$ and $\F_\Psi=\left\{\Psi_n=T^{-1}\,\hat\varphi_n,\,n\geq0\right\}$. These two families are obviously biorthogonal, $\left<\Psi_n,\varphi_m\right>=\delta_{n,m}$, and they are both complete in $\Hil$: so they are two (in general different) bases of $\Hil$. We can now define on, say, $\F_\varphi$, two operators $a$ and $b$ which act as lowering and raising operators:
\be
\left\{
\begin{array}{ll}
a\,\varphi_n=\sqrt{n }\,\varphi_{n-1},\\
b\,\varphi_n=\sqrt{n+1 }\,\varphi_{n+1},\\
\end{array}
\right.
\label{45}\en
for all $n\geq0$. In particular the first equation implies that $a\varphi_0=0$. Incidentally we observe that $b^\dagger\neq a$, since $\F_{\varphi}$ is not, in general, an o.n. basis. Iterating the second equation in (\ref{45}), we deduce that $\varphi_n=\frac{b^n}{\sqrt{n! }}\,\varphi_0$, which gives an alternative expression for the vector $\varphi_n$ and, moreover, shows that $\varphi_0\in D^\infty(b)$. Hence Assumption 1 is satisfied.

Since $(T\,c^\dagger T^{-1})\varphi_n=T\,c^\dagger\hat\varphi_n=\sqrt{n+1 }\,T\hat\varphi_{n+1}=\sqrt{n+1 }\,\varphi_{n+1}$, and since $\F_\varphi$ is a basis of $\Hil$, we deduce that $b=T\,c^\dagger T^{-1}$. Analogously, we can prove that $a=T\,c\, T^{-1}$. It is now clear that $[a,b]=\1$ and that $a^\dagger=T^{-1}c^\dagger T$. To prove Assumption 2 we first notice that $b^\dagger \Psi_0=\left(T\,c^\dagger T^{-1}\right)^\dagger (T^{-1}\hat\varphi_0)= T^{-1} c\, \hat\varphi_0=0$. Moreover, since for all $n\geq0$
$$
a^\dagger\Psi_n=\left(T^{-1}c^\dagger T\right)T^{-1}\hat\varphi_n=T^{-1}c^\dagger \hat\varphi_n=\sqrt{n+1}\,\Psi_{n+1},
$$
by iteration we deduce that $\Psi_n=\frac{{a^\dagger}^n}{\sqrt{n! }}\,\Psi_0$, which means that $\Psi_0\in D^\infty(a^\dagger)$. This prove Assumption 2, while Assumption 3 follows from our previous claim on $\F_\varphi$ and $\F_\Psi$: they are both bases of $\Hil$. Finally, since they are obtained by the o.n. basis $\F_{\hat\varphi}$ by acting with the bounded operators $T$ or $T^{-1}$, they are also Riesz bases.

\end{proof}

\vspace{2mm}

{\bf Remarks:--} (1) The proof of the above theorem recall, at least in part, the construction given in Section II.1. This is not surprising since we are now dealing with Riesz bases. The difference will be evident in the next Section.

(2) Theorem 2 implies that the intertwining operators in (\ref{219}) for RPB are bounded, with bounded inverse.

\section{PB versus bosons}

In this section we will not assume that $T$ and $T^{-1}$ are bounded operators, and many domain problems will arise as a consequence. This will be related to the nature of the biorthogonal bases we work with, which will not be Riesz bases any longer. The relevance of this section, as widely  explained in \cite{bagrev} and references therein, follows from the fact that all the physical  examples seem to give rise to PB and not to RPB. From the mathematical side, we will formulate now a different theorem which is the analogue of the one proven in the previous section in this different settings and we will show that, even if part of that proof can be repeated here, most of the arguments should be changed to take care of unboundedness of the operators. As in the previous section, to simplify the proof and the notation, we fix $d=1$. Extension to $d>1$ is straightforward.

\begin{thm}
Let $a$ and $b$ be two operators on $\Hil$ satisfying  $[a,b]=\1$, and for which Assumptions 1, 2, and 3 (but not 4)  of Section II are satisfied. Then two unbounded, densely defined, operators $c$ and $R$ on $\Hil$ exist,  such that $[c,c^\dagger]=\1$ and $R$ is positive, self adjoint and admits an unbounded inverse $R^{-1}$. Moreover
\be
a=RcR^{-1},\qquad b=Rc^\dagger R^{-1},
\label{51}\en
and, introducing $\hat\varphi_n=\frac{{c^\dagger}^n}{\sqrt{n!}}\,\hat\varphi_0$, $c\varphi_0=0$, we have the following: $\hat\varphi_n\in D(R)\cap D(R^{-1})$, for all $n\geq0$, and the sets $\{R\hat\varphi_n\}$ and $\{R^{-1}\hat\varphi_n\}$ are biorthogonal bases of $\Hil$.

Viceversa, let us consider two unbounded, densely defined, operators $c$ and $R$ on $\Hil$ satisfying $[c,c^\dagger]=\1$ with $R$ positive, self-adjoint  with unbounded inverse $R^{-1}$. Suppose that, introduced $\hat\varphi_n$ as above, $\hat\varphi_n\in D(R)\cap D(R^{-1})$, for all $n\geq0$, and that the sets $\{R\hat\varphi_n\}$ and $\{R^{-1}\hat\varphi_n\}$ are biorthogonal bases of $\Hil$. Then  two operators $a$ and $b$ can be introduced for which $[a,b]=\1$, and for which equations (\ref{51}) and Assumptions 1, 2, and 3 (but not 4)  of Section II are satisfied.

\end{thm}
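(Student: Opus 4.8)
The plan is to replay the proof of Theorem 2, but now treating $S_\varphi$, $S_\Psi$ and their square roots as genuinely unbounded operators: wherever boundedness was used tacitly there, a domain argument is needed. For the \emph{direct} implication I would put $R:=S_\varphi^{1/2}$, so $R^{-1}=S_\varphi^{-1/2}=S_\Psi^{1/2}$, and $\hat\varphi_n:=R^{-1}\varphi_n$. Since $\varphi_n\in D(S_\Psi)$ (because $S_\Psi\varphi_n=\Psi_n$, see (\ref{213})) and $D(S_\Psi)\subseteq D(S_\Psi^{1/2})$, these vectors are well defined, with $\langle\hat\varphi_n,\hat\varphi_m\rangle=\langle\varphi_n,S_\Psi\varphi_m\rangle=\langle\varphi_n,\Psi_m\rangle=\delta_{n,m}$, $R^{-1}\hat\varphi_n=S_\Psi\varphi_n=\Psi_n$ and $R\hat\varphi_n=\varphi_n$. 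The point where Assumption 3 really enters is completeness of $\{\hat\varphi_n\}$: I would show that $\D_\varphi$ is a core for $S_\Psi$ — for $f\in D(S_\Psi)$ the partial sums $f_N=\sum_{k\le N}\langle\Psi_k,f\rangle\varphi_k$ tend to $f$ ($\F_\varphi$ a basis), while $S_\Psi f_N=\sum_{k\le N}\langle\Psi_k,f\rangle\Psi_k$ tends to $S_\Psi f$, since by symmetry $S_\Psi f=\sum_k\langle\varphi_k,S_\Psi f\rangle\Psi_k=\sum_k\langle\Psi_k,f\rangle\Psi_k$ and $\F_\Psi$ is a basis — hence $\D_\varphi$ is a core also for $R^{-1}=S_\Psi^{1/2}$, and the range of an injective self-adjoint operator restricted to a core is dense, so $\{\hat\varphi_n\}$ spans a dense subspace and $\F_{\hat\varphi}$ is an orthonormal basis. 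On it I define $c\hat\varphi_n=\sqrt n\,\hat\varphi_{n-1}$, $c^\dagger\hat\varphi_n=\sqrt{n+1}\,\hat\varphi_{n+1}$, so $[c,c^\dagger]=\1$ with $c$ densely defined and unbounded. Using $a\varphi_n=\sqrt n\,\varphi_{n-1}$, $b\varphi_n=\sqrt{n+1}\,\varphi_{n+1}$ (valid already under Assumptions 1--3, as recalled in the proof of Theorem 2) together with $R^{-1}\varphi_n=\hat\varphi_n$ and $R\hat\varphi_n=\varphi_n$, a short chain of domain checks gives $(RcR^{-1})\varphi_n=a\varphi_n$ and $(Rc^\dagger R^{-1})\varphi_n=b\varphi_n$ on $\F_\varphi$, i.e.\ (\ref{51}) with this $R$; finally $\hat\varphi_n\in D(R)\cap D(R^{-1})$ by the identities above, and $\{R\hat\varphi_n\}=\F_\varphi$, $\{R^{-1}\hat\varphi_n\}=\F_\Psi$ are biorthogonal bases by (\ref{27}) and Assumption 3.

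For the \emph{converse} I would run this backwards. Build the o.n.\ basis $\F_{\hat\varphi}=\{\hat\varphi_n={c^\dagger}^n\hat\varphi_0/\sqrt{n!}\}$ from $c$; since $\hat\varphi_n\in D(R)\cap D(R^{-1})$ by hypothesis, set $\varphi_n:=R\hat\varphi_n$, $\Psi_n:=R^{-1}\hat\varphi_n$, which are biorthogonal bases of $\Hil$ by assumption. Define $a,b$ on $\F_\varphi$ by $a\varphi_n=\sqrt n\,\varphi_{n-1}$, $b\varphi_n=\sqrt{n+1}\,\varphi_{n+1}$; then $[a,b]=\1$ on $\D_\varphi$, and iterating the second relation gives $\varphi_n=b^n\varphi_0/\sqrt{n!}$, whence $a\varphi_0=0$ and $\varphi_0\in D^\infty(b)$ (Assumption 1). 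One reads off $a=RcR^{-1}$, $b=Rc^\dagger R^{-1}$ from $(RcR^{-1})\varphi_n=Rc\hat\varphi_n=\sqrt n\,\varphi_{n-1}$, etc.; taking adjoints, with $R^\dagger=R$, $(R^{-1})^\dagger=R^{-1}$ and $(c^\dagger)^\dagger\supseteq c$, so $b^\dagger\supseteq R^{-1}cR$ and $a^\dagger\supseteq R^{-1}c^\dagger R$, one finds $b^\dagger\Psi_0=R^{-1}c\,R\Psi_0=R^{-1}c\,\hat\varphi_0=0$ and $a^\dagger\Psi_n=\sqrt{n+1}\,\Psi_{n+1}$, hence $\Psi_n={a^\dagger}^n\Psi_0/\sqrt{n!}$ and $\Psi_0\in D^\infty(a^\dagger)$ (Assumption 2); Assumption 3 holds since $\F_\varphi,\F_\Psi$ are bases. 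Assumption 4 must fail: were $\F_\Psi$ a Bessel sequence, then $\sum_n|\langle\Psi_n,f\rangle|^2=\|R^{-1}f\|^2$ for every $f\in D(R^{-1})$ would force $R^{-1}$ bounded, contrary to hypothesis.

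The hard part will be everything delicate in the direct implication, and it concerns the metric operator $S_\varphi$: one must know it is not merely positive and symmetric but self-adjoint, so that $R=S_\varphi^{1/2}$ and $R^{-1}$ are legitimate spectral objects, and — relatedly — that the formally orthonormal family $\{\hat\varphi_n\}$ is genuinely complete. In Theorem 2 both facts come for free because $S_\varphi$ is bounded with bounded inverse; here they are exactly the places that require the theory of unbounded operators, the core argument above (a vector of $D(S_\Psi)$ being simultaneously expandable in the two biorthogonal bases) being the mechanism I would lean on. A secondary, ever-present nuisance is that the operator identities (\ref{51}) are first obtained only on the dense set $\D_\varphi$, and that adjoints of products of unbounded operators must be handled through inclusions rather than equalities.
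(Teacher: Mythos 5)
Your proposal is correct and follows the same architecture as the paper's proof: in the direct part you take $R=S_\varphi^{1/2}$, set $\hat\varphi_n=S_\varphi^{-1/2}\varphi_n$, get orthonormality from biorthogonality exactly as in (\ref{53}), define $c$ on the resulting o.n.\ basis and read off the intertwining relations on $\F_\varphi$; your converse is essentially verbatim the paper's. The one place where you genuinely diverge is the technical heart of the theorem, namely how to make $S_\varphi^{1/2}$ a ``legitimate spectral object'' and how to prove completeness of $\{\hat\varphi_n\}$. The paper handles the first point by invoking the existence of a positive self-adjoint (Friedrichs) extension $\hat S_\varphi$ of the symmetric positive operator $S_\varphi$ and then working with $\hat S_\varphi^{\pm 1/2}$ throughout; for completeness it argues rather tersely that no $f\in D(\hat S_\varphi^{-1/2})$ is orthogonal to all $\hat\varphi_n$ and appeals to density of that domain. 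Your core argument is a cleaner mechanism for the second point (the range of an injective self-adjoint operator restricted to a core is indeed dense), and it is worth noting that it actually does more than you claim: showing that for every $f$ in the domain of a symmetric extension one has $f_N\to f$ and $S_\Psi f_N\to S_\Psi f$ with $f_N\in\D_\varphi$ proves that \emph{every} self-adjoint extension equals the closure of $S_\Psi\restriction\D_\varphi$, i.e.\ that $S_\Psi$ is essentially self-adjoint on $\D_\varphi$ once one self-adjoint extension exists (which Friedrichs guarantees). This closes the gap you flag in your last paragraph, and is in fact a sharper statement than the paper makes, which contents itself with picking \emph{some} positive self-adjoint extension. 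Your explicit argument for the failure of Assumption 4 (a Bessel bound on $\F_\Psi$ would force $\|R^{-1}f\|^2\leq A_\Psi\|f\|^2$ on $D(R^{-1})$, contradicting unboundedness) is also more self-contained than the paper's bare citation of the Riesz-basis literature.
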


\begin{proof}

To prove the first part of the theorem we recall that the two sets $\F_\varphi=\{\varphi_n, \,n\geq 0\}$ and $\F_\Psi=\{\Psi_n, \,n\geq 0\}$ defined as in Section II are biorthogonal bases of $\Hil$ but they are not Riesz bases.  Hence, defining
\be
S_\varphi\,\Psi_n=\varphi_n,\qquad S_\Psi\,\varphi_n=\Psi_n,
\label{52}\en
for all $n\geq0$, on the domains $D(S_\varphi)=\mbox{linear span } \{\Psi_n\}$ and $D(S_\Psi)=\mbox{linear span } \{\varphi_n\}$, it follows from general results, \cite{you}, that both these operators are unbounded, so that they are not everywhere defined. It is possible to check that $\left<f,S_\varphi\,f\right>\geq0$ for all $f\in D(S_\varphi)$ and $\left<f,S_\Psi\,f\right>\geq0$ for all $f\in D(S_\Psi)$. In particular, if $f\neq0$, both these mean values are strictly positive. It is straightforward to check that, as in the previous section, $S_\varphi=S_\Psi^{-1}$, and that both operators are symmetric:
$$
\left\{
\begin{array}{ll}
 \left<f,S_\varphi\,g\right>=\left<S_\varphi\,f,g\right>, \quad \forall\,f,g\in D(S_\varphi),\\
 \left<f,S_\Psi\,g\right>=\left<S_\Psi\,f,g\right>, \quad \forall\,f,g\in D(S_\Psi).\\
\end{array}
\right.
$$
In these conditions it is known, \cite{ped}, that each one of these operators admits a self-adjoint extension, which is also positive. We call these extensions $\hat S_\varphi$ and $\hat S_\Psi$. Using standard results in functional calculus, we can now define square roots of these operators and the following holds:
$$
\hat S_\varphi=\hat S_\Psi^{-1}, \quad \hat S_\varphi^{1/2}=\hat S_\Psi^{-1/2}, \quad \hat S_\varphi^{-1/2}=\hat S_\Psi^{1/2}.
$$
It is easy to check that, for all $n\geq0$, $\varphi_n\in D(\hat S_\varphi^{-1/2})$, so that $D(\hat S_\Psi)=D(\hat S_\varphi^{-1})\subseteq D(\hat S_\varphi^{-1/2})$. Indeed,  we can check that $\|\hat S_\varphi^{-1/2}\varphi_n\|=1$. This is a particular case of the following more general result:
\be
\left<\hat S_\varphi^{-1/2}\varphi_n,\hat S_\varphi^{-1/2}\varphi_k\right>=\left<\varphi_n,\hat S_\varphi^{-1}\varphi_k\right>=\left<\varphi_n,\hat S_\Psi\varphi_k\right>=\left<\varphi_n,\Psi_k\right>=\delta_{n,k},
\label{53}\en
due to the biorthogonality of $\F_\varphi$ and $\F_\Psi$. This suggests to introduce a third set of vectors of $\Hil$, $\F_{\hat\varphi}=\{\hat\varphi_n:=\hat S_\varphi^{-1/2}\varphi_n,\, n\geq0\}$, which is made of o.n. vectors. As in Section III, defining $\hat\Psi_n=\hat S_\Psi^{-1/2}\Psi_n$, does not produce new vectors; again we get  $\hat\Psi_n=\hat\varphi_n$ $\forall\, n\geq0$. We also deduce that  $D(\hat S_\varphi)\subseteq D(\hat S_\varphi^{1/2})$.

Let us notice that, since $D(\hat S_\Psi)\subseteq D(\hat S_\varphi^{-1/2})\subset \Hil$ and since the closure of $D(\hat S_\Psi)$ returns $\Hil$, $\overline{D(\hat S_\varphi^{-1/2})}^{\,\|\,\|}=\Hil$. Analogously,  $\overline{D(\hat S_\varphi^{1/2})}^{\,\|\,\|}=\Hil$. Moreover, $\forall\,n\geq 0$, $\hat\varphi_n\in D(\hat S_\varphi^{-1/2})\cap D(\hat S_\varphi^{1/2})$: indeed, a straightforward computation shows that $\hat S_\varphi^{1/2}\hat\varphi_n=\varphi_n$ and that $\hat S_\varphi^{-1/2}\hat\varphi_n=\hat S_\varphi^{-1}\varphi_n=\hat S_\Psi\varphi_n=\Psi_n$.

Finally, if $f\in D(\hat S_\varphi^{-1/2})$ is orthogonal to all $\hat\varphi_n$, $f=0$. Hence, due to the density of $D(\hat S_\varphi^{-1/2})$ in $\Hil$, we conclude that $\F_{\hat\varphi}$ is an o.n. basis of $\Hil$, \cite{han}. On $\F_{\hat\varphi}$ we define the {\em standard} annihilation operator $c$ as usual, $c\,\hat\varphi_n=\sqrt{n\,}\hat\varphi_{n-1}$, whose adjoint is the creation operator $c^\dagger\,\hat\varphi_n=\sqrt{n+1\,}\hat\varphi_{n+1}$. We can rewrite the first of these equation as $c\,\hat S_\varphi^{-1/2}\,\varphi_n=\sqrt{n\,}\,\hat S_\varphi^{-1/2}\,\varphi_{n-1}$, which implies, first of all, that $c\,\hat S_\varphi^{-1/2}\,\varphi_n\in D(\hat S_\varphi^{1/2})$. Also, $\hat S_\varphi^{1/2}\,c\,\hat S_\varphi^{-1/2}\,\varphi_n=\sqrt{n\,}\,\varphi_{n-1}$ which, compared with $a\,\varphi_n=\sqrt{n\,}\,\varphi_{n-1}$, shows that $a=\hat S_\varphi^{1/2}\,c\,\hat S_\varphi^{-1/2}$.

In a similar way, $c^\dagger\,\hat\varphi_n=\sqrt{n+1\,}\hat\varphi_{n+1}$ can be rewritten as $c^\dagger\,\hat S_\varphi^{-1/2}\,\varphi_n=\sqrt{n+1\,}\,\hat S_\varphi^{-1/2}\,\varphi_{n-1}$. Therefore $c^\dagger\,\hat S_\varphi^{-1/2}\,\varphi_n\in D(\hat S_\varphi^{1/2})$ and $\hat S_\varphi^{1/2}\,c^\dagger\,\hat S_\varphi^{-1/2}\,\varphi_n=\sqrt{n+1\,}\,\varphi_{n+1}$ which, compared with $b\,\varphi_n=\sqrt{n+1\,}\,\varphi_{n+1}$, shows that $b=\hat S_\varphi^{1/2}\,c^\dagger\,\hat S_\varphi^{-1/2}$. This proves (\ref{51}), identifying $R$ with $\hat S_\varphi^{1/2}$. Also, since $R\hat\varphi_n=\hat S_\varphi^{1/2}\hat\varphi_n=\varphi_n$ and $R^{-1}\hat\varphi_n=\hat S_\varphi^{-1}\varphi_n=\Psi_n$, the linear spans of both $\{R\hat\varphi_n\}$ and $\{R^{-1}\hat\varphi_n\}$ are biorthogonal bases of $\Hil$.

\vspace{3mm}

Let us now prove the inverse statement. Because of our assumptions, the set $\F_{\hat\varphi}$ of vectors $\hat\varphi_n=\frac{{c^\dagger}^n}{\sqrt{n!}}\,\hat\varphi_0$, $c\varphi_0=0$, is an o.n. basis in $\Hil$ and $\hat\varphi_n\in D(R)\cap D(R^{-1})$, $\forall\,n\geq0$. Then we define, for all $n\geq0$, $\varphi_n=R\hat\varphi_n$, $\Psi_n=R^{-1}\hat\varphi_n$, $\F_\varphi=\{\varphi_n,\,n\geq0\}$, $\F_\Psi=\{\Psi_n,\,n\geq0\}$, and $D_\varphi$ and $D_\Psi$ their linear span, which are both dense in $\Hil$ since, by assumption, $\F_\varphi$ and $\F_\Psi$ are (biorthogonal) bases of $\Hil$.

We can now introduce lowering and raising operators on $\F_\varphi$ as in (\ref{45}). In particular, iterating $b\,\varphi_n=\sqrt{n+1}\,\varphi_{n+1}$, we get $\varphi_n=\frac{b^n}{\sqrt{n!\,}}\,\varphi_0$ and we also find that $b^\dagger\Psi_n=\sqrt{n-1}\,\Psi_{n-1}$. The first equation, $a\,\varphi_n=\sqrt{n-1}\,\varphi_{n-1}$, produces $a^\dagger\Psi_n=\sqrt{n+1}\,\Psi_{n+1}$, which, again by iteration, gives $\Psi_n=\frac{{a^\dagger}^n}{\sqrt{n!\,}}\,\Psi_0$.

It is now a simple exercise to check that:

\begin{enumerate}

\item $a\varphi_0=0$ and $\varphi_0\in D^\infty(b)$. Hence Assumption 1 is satisfied.

\item $b^\dagger\Psi_0=0$ and $\Psi_0\in D^\infty(a^\dagger)$. Hence Assumption 2 is satisfied.

\item With similar techniques as in the first part of the proof we deduce that $b=R\,c\,R^{-1}$ and $a=R\,c\,R^{-1}$, which could also be checked  computing directly their action on the vectors $\hat\varphi_n$.

\item $\overline{D_\varphi}^{\,\|\,\|}=\overline{D_\Psi}^{\,\|\,\|}=\Hil$. Hence Assumption 3 is satisfied.

\item since $\F_\varphi$ and $\F_\Psi$ are obtained from the o.n. basis $\F_{\hat\varphi}$ via the action of an unbounded, invertible, operator with unbounded inverse, they cannot be Riesz bases, \cite{you}. Hence Assumption 4 is violated.

\end{enumerate}

This concludes the proof.

\end{proof}

\subsection{Physical examples}

We conclude this section with some examples, arising from quantum mechanics, in which the operators $\hat S_\varphi$ and $\hat S_\Psi$ can be explicitly identified. These examples are reviewed in \cite{bagrev}, where the original references and more examples (even in $d>1$) can be found.

\subsubsection{The extended quantum harmonic oscillator}

The hamiltonian of this model, introduced in \cite{dapro}, is the  non self-adjoint operator $H_\beta=\frac{\beta}{2}\left(p^2+x^2\right)+i\sqrt{2}\,p$, where $\beta$ is a positive parameter and $[x,p]=i$.
Introducing the standard bosonic operators $a=\frac{1}{\sqrt{2}}\left(x+\frac{d}{dx}\right)$, $a^\dagger=\frac{1}{\sqrt{2}}\left(x-\frac{d}{dx}\right)$, $[a,a^\dagger]=\1$, and the number operator $N=a^\dagger a$, we can write $H_\beta=\beta N+(a-a^\dagger)+\frac{\beta}{2}\,\1$ which, introducing further the operators
\be
\hat A_\beta=a-\frac{1}{\beta}, \qquad \hat B_\beta=a^\dagger+\frac{1}{\beta},
\label{56}\en
can be written as
\be
H_\beta=\beta\left(\hat B_\beta \hat A_\beta+\gamma_\beta\,\1\right),
\label{57}
\en
where $\gamma_\beta=\frac{2+\beta^2}{2\beta^2}$. It is clear that, for all $\beta>0$, $\hat A_\beta^\dagger\neq \hat B_\beta$ and that $[\hat A_\beta, \hat B_\beta]=\1$. Hence we have to do with pseudo-bosonic operators which, as proved in \cite{bagpb4}, satisfy Assumptions 1, 2 and 3 but not Assumption 4. Indeed we have deduced that  $\hat S_\varphi=e^{2 (a+a^\dagger)/\beta}$, which is unbounded with unbounded inverse. We have PB which are not regular.

\subsubsection{The Swanson hamiltonian}

The starting point is the following non self-adjoint hamiltonian, \cite{dapro}:
$$
H_\theta=\frac{1}{2}\left(p^2+x^2\right)-\frac{i}{2}\,\tan(2\theta)\left(p^2-x^2\right),
$$
where $\theta$ is a real parameter taking value in $\left(-\frac{\pi}{4},\frac{\pi}{4}\right)\setminus\{0\}=:I$. It is clear that $H_\theta^\dagger\neq H_\theta$, for all $\theta\in I$.  Introducing the annihilation and creation operators $a$ and $a^\dagger$ as usual, we write
$$
H_\theta=N+\frac{i}{2}\,\tan(2\theta)\left(a^2+(a^\dagger)^2\right)+\frac{1}{2}\,\1,
$$
where $N=a^\dagger a$. This hamiltonian can be still rewritten,  by introducing the operators
\be
\left\{
\begin{array}{ll}
A_\theta=\cos(\theta)\,a+i\sin(\theta)\,a^\dagger,  \\
B_\theta=\cos(\theta)\,a^\dagger+i\sin(\theta)\,a,
\end{array}
\right.
\label{41c}\en
as
$$
H_\theta=\omega_\theta\left(B_\theta\,A_\theta+\frac{1}{2}\1\right),
$$
where $\omega_\theta=\frac{1}{\cos(2\theta)}$ is well defined since $\cos(2\theta)\neq0$ for all $\theta\in I$. It is clear that $A_\theta^\dagger\neq B_\theta$ and that $[A_\theta,B_\theta]=\1$. In \cite{bagpb4} we have proven that these operators satisfy Assumptions 1, 2 and 3 but not Assumption 4. In particular we have deduced that  $\hat S_\varphi=|\alpha|^2\,e^{i\theta\,(a^2-{a^\dagger}^2)}$, where $\alpha\in\Bbb{C}$ is arbitrary but fixed. which is unbounded with unbounded inverse. Again, we find PB which are not regular.

\section{Conclusions}

In this paper we have discussed the relation between  RPB and PB with ordinary bosons. As the two theorems proven here clearly show, there is a strong connection between these {\em excitations}, at least under suitable assumptions. Which are the relevant assumptions are clarified by the theorems: for instance, if we just consider operators satisfying $[a,b]=\1$, this is not enough to get any {\em relevant functional structure}. If, as an example, we take $a=\frac{d}{dx}$, $b=x$ and $\Hil=\Lc^2(\Bbb R)$, no square integrable function $\varphi_0(x)$ exists with the required properties. So Assumption 1 (and Assumption 2 as well) is not satisfied. So we cannot introduce, starting from $a$ and $b$, a basis of $\Hil$. This suggests that,  while Assumption 4 can be avoided, and Assumption 3 could be weakened by considering relevant subspaces of $\Hil$, Assumption 1 and 2 are absolutely necessary.

Further analysis on these operators are in progress.

\section*{Acknowledgements}
   The author acknowledge M.I.U.R. for financial support.


\begin{thebibliography}{99}

\bibitem{bagpb1} F. Bagarello, {\em Pseudo-bosons, Riesz bases and coherent states}, J. Math. Phys., {\bf 50}, 023531 (2010) (10pg)

\bibitem{bagpb2} F. Bagarello {\em Construction of pseudo-bosons systems},  J. Math. Phys., {\bf 51}, 053508 (2010) (10pg)

\bibitem{bagpb3} F. Bagarello {\em Mathematical aspects of intertwining operators: the role of Riesz bases},  J. Phys. A,  {\bf 43},  175203 (2010) (12pp)

\bibitem{bagcal}  F. Bagarello, F. Calabrese {\em Pseudo-bosons arising from Riesz bases}, Bollettino del Dipartimento di Metodi e Modelli Matematici, {\bf 2}, 15-26, (2010)



\bibitem{tri} D.A. Trifonov, {Pseudo-boson coherent and Fock states}, quant-ph/0902.3744

\bibitem{bagpb4} F. Bagarello, {\em Examples of Pseudo-bosons in quantum mechanics},  Phys. Lett. A, {\bf 374}, 3823-3827 (2010)


\bibitem{abg} S.T. Ali, F. Bagarello, J.-P. Gazeau, {\em Modified Landau levels, damped harmonic oscillator and two-dimensional pseudo-bosons}, J. Math. Phys., submitted
    
\bibitem{jan}  J. Govaerts, C. M. Bwayi,
 O. Mattelaer,  {\em The Klauder–Daubechies construction of the phase-space path integral and the harmonic oscillator}, J. Phys. A,  {\bf 42},  445304 (2009) (20pp)
    
\bibitem{ben}  C. Bender, {\em Making Sense of Non-Hermitian Hamiltonians}, Rep. Progr.  Phys., {\bf 70},  947-1018 (2007)

\bibitem{mosta} A. Mostafazadeh, {\em Pseudo-hermitian quantum mechanics}, quant-ph/0810.5643

\bibitem{mosta2} A. Mostafazadeh, {\em Conceptual aspects of PT-symmetry and pseudo-hermiticity: a status report}, quant-ph/1008.4680
    

\bibitem{you} Young R., {\em An introduction to nonharmonic Fourier series}, Academic Pree, New York, (1980)


\bibitem{bagrev} F. Bagarello, {\em Pseudo-bosons, so far}, Rev. Math. Phys., submitted


\bibitem{intop} Kuru S., Tegmen A., Vercin A., {\em Intertwined isospectral potentials in an arbitrary dimension},
J. Math. Phys, {\bf 42}, No. 8, 3344-3360, (2001); Kuru S.,
Demircioglu B., Onder M., Vercin A., {\em Two families of
superintegrable and isospectral potentials in two dimensions}, J.
Math. Phys, {\bf 43}, No. 5, 2133-2150, (2002); Samani K. A.,
Zarei M., {\em Intertwined hamiltonians in two-dimensional curved
spaces}, Ann. of Phys., {\bf 316}, 466-482, (2005).

\bibitem{bag1} F. Bagarello {\em Extended SUSY quantum mechanics, intertwining operators and coherent states},
  Phys. Lett. A, DOI: 10.1016/ j.physleta. 2008.08.047 (2008), F. Bagarello {\em Vector coherent states and intertwining operators},
   J. Phys. A., doi:10.1088/1751-8113/42/7/075302, (2009), F. Bagarello, {\em Intertwining operators between different Hilbert spaces: connection with frames},  J. Math. Phys., DOI: 10.1063/1.3094758, {\bf 50}, 043509 (2009) (13pp)


\bibitem{book1}  S.T. Ali, J-P.  Antoine and  J-P.  Gazeau,
       {\em  Coherent States, Wavelets and Their Generalizations\/},
           Springer-Verlag, New York, 2000.


\bibitem{book2}  J-P.  Gazeau, {\em Coherent states in quantum physics}, Wiley-VCH, Berlin 2009



\bibitem{chri} Christensen O., {\em An Introduction to Frames and Riesz Bases}, Birkh\"auser, Boston, (2003)

\bibitem{ped} G. K. Pedersen, {\em Analysis now}, Springer-Verlag, New York 1989



\bibitem{han} V. L. Hansen, {\em Functional analysis: entering Hilbert space}, World Scientific, Singapore (2006)








\bibitem{dapro} J. da Provid$\hat e$ncia, N. Bebiano, J.P. da Provid$\hat e$ncia, {\em Non hermitian operators with real spectrum in quantum mechanics}, arXiv:0909.3054 [quant-ph]


\end{thebibliography}
\end{document}